\newtheorem{lemma}{Lemma}
\newtheorem*{lemm*}{Lemma}
\newtheorem{theorem}{Theorem}
\providecommand{\keywords}[1]
{
  \small	
  \textbf{\textit{Keywords---}} #1
}
\DeclareMathOperator{\dom}{dom}
\newcommand{\PMAJ}{\operatorname{PMAJ}}
\renewcommand{\Re}{\mathbb{R}}
\newcommand{\degeps}{\deg_\epsilon}
\newcommand{\zoo}{\{0,1\}}
\newcommand{\zoon}{\zoo^n}
\newcommand{\moo}{\{-1,+1\}}
\newcommand{\VV}{\mathcal{V}}
\newcommand{\oneton}{\{1,2,\dots,n\}}
\newcommand{\CT}{$2\textsc{p}\mathrm{CT}_{n,t,\epsilon}$} 
\newcommand{\CTc}{$2\textsc{p}\mathrm{CT}_{n,t,1/2}$} 
\newcommand{\CTs}{$2\textsc{p}\mathrm{CT}_{n,t,\epsilon}^{\gamma}$} 
\newcommand{\CTsc}{$2\textsc{p}\mathrm{CT}_{n,t,1/2}^{\gamma}$}
\begin{document}

\title{Quantum Communication Complexity of Distribution Testing}
\author{
  Aleksandrs Belovs$^{1}$,
  Arturo Castellanos$^{2}$,
  Fran{\c c}ois Le Gall$^{3}$,\\
  Guillaume Malod$^{4}$,
  Alexander A. Sherstov$^{5}$\\
  \small $^{1}$University of Latvia \\
  \small $^{2}$Kyoto University \\
  \small $^{3}$Nagoya University \\
  \small $^{4}$Universit{\' e} de Paris \\
    \small $^{5}$University of California, Los Angeles
  }



\maketitle

\begin{abstract}
The classical communication complexity of testing closeness of discrete distributions has recently been studied by Andoni, Malkin and Nosatzki (ICALP'19). In this problem, two players each receive $t$ samples from one distribution over $[n]$, and the goal is to decide whether their two distributions are equal, or are $\epsilon$-far apart in the $l_1$-distance. In the present paper we show that the quantum communication complexity of this problem is $\tilde{O}(n/(t\epsilon^2))$ qubits when the distributions have low $l_2$-norm, which gives a quadratic improvement over the classical communication complexity obtained by Andoni, Malkin and Nosatzki. We also obtain a matching lower bound by using the pattern matrix method. Let us stress that the samples received by each of the parties are classical, and it is only communication between them that is quantum. Our results thus give one setting where quantum protocols overcome classical protocols for a testing problem with purely classical samples.
\end{abstract}
\keywords{Quantum communication complexity,  Distribution testing, Lower bounds }

\setcounter{footnote}{0}
\section{Introduction}
\paragraph{Background.}
Property testing~\cite{Goldreich17,GR00} is the task of (approximately) distinguishing objects having some specific property from those which are ``far" from having it, without necessarily looking at the objects in their entirety. An interesting subfield is discrete distribution testing~\cite{Canonne15}, where the objects are probability distributions.

One of the main tasks in (discrete) distribution testing, namely closeness testing, is about deciding 
whether two distributions $p$ and $q$ over $[n]$ are equal or $\epsilon$-far from each other in the $l_1$-norm, 
given access only to a limited number of samples of each distribution. Early testers~\cite{BFRSW00,BFRSW13} used a method based on collisions. Using improved estimators, testers with optimal sample complexity have then been constructed~\cite{CDVV14,DK16}. 

Very recently, Andoni, Malkin and Nosatzki \cite{AMN19} have, for the first time, considered distribution testing in the two-party setting. Here two players, Alice and Bob, each own as input $t$ samples of the distributions $p$ and $q$: Alice has $t$ samples from $p$ and Bob has $t$ samples from $q$. The goal is for Alice and Bob to decide if the two distributions are equal or $\epsilon$-far from each other in the $l_1$-norm, using as little communication as possible. By adapting the techniques from prior works \cite{CDVV14,DK16}, Andoni, Malkin and Nosatzki have shown that this problem (named \CT{} in \cite{AMN19}) can be solved with high probability using $O\big(\frac{n^2}{t^2\epsilon^4}+1\big)$ bits of communication whenever $t$ is above the information-theoretic lower bound (given in Equation (\ref{ic}) below) which is the minimum number of samples needed so that meaningful information about $p$ and $q$ can be extracted from them.
They also showed a matching lower bound $\Omega\big(\frac{n^2}{t^2}\big)$ on the two-party communication complexity of \CTc{}.


\paragraph{Our results.}
In this paper we investigate the quantum communication complexity of this problem. Our main result shows that a significant advantage can be obtained in the quantum setting when at least one of the two distributions has low $l_2$-norm. Concretely, for any $\gamma>0$, we consider the version of \CT{} in which the inputs $p,q$ satisfy the condition $\min(||p||_2, ||q||_2)\le \gamma t\epsilon^2/ n$. We denote this problem \CTs{}. The lower bound from \cite{AMN19} shows that for $\gamma=\Omega(\frac{1}{\sqrt{\log n}})$,
this version is as hard as the original version of the problem.\footnote{Indeed, the lower bound $\Omega\big(\frac{n^2}{t^2}\big)$ from \cite{AMN19} is  shown for input distributions such that  $||p||_2=||q||_2=O (\frac{t\epsilon^2}{n\sqrt{\log n}})$.}
As in all previous works \cite{AMN19,CDVV14,DK16}, we will assume throughout the paper that $t$ is above the information-theoretic threshold:
\begin{equation}\label{ic}
 t\geq C\max(n^{2/3}\cdot\epsilon^{-4/3},\sqrt{n}\cdot\epsilon^{-2}),  
\end{equation}
where $C$ is a universal constant (see \cite{AMN19,CDVV14} for details).

We first show the following theorem.
\begin{theorem}\label{th:main}
There exists an absolute constant $\gamma_0$ such that the following holds: for all $\gamma\le \gamma_0$, 
the problem \CTs{} can be solved with high probability by a quantum protocol that uses $\tilde O\big(\frac{n}{t\epsilon^2}+1\big)$ qubits of communication.
\end{theorem}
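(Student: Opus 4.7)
The plan is to follow the classical template of Andoni, Malkin and Nosatzki~\cite{AMN19} but to replace the classical computation of the ``cross-collision'' term in their estimator by a quantum inner-product estimation based on amplitude estimation. After Poissonization, the classical statistic $Z = \sum_i \bigl[(a_i-b_i)^2 - a_i - b_i\bigr]$ (where $a_i,b_i$ are the counts of symbol $i$ in Alice's and Bob's samples) is an unbiased estimator of $t^2\|p-q\|_2^2$; combined with the Cauchy--Schwarz inequality $\|p-q\|_1 \leq \sqrt{n}\cdot\|p-q\|_2$, the promise $\|p-q\|_1 \geq \epsilon$ implies $\|p-q\|_2^2 \geq \epsilon^2/n$, so it is enough to determine $Z/t^2$ with additive accuracy $\Theta(\epsilon^2/n)$. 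Writing $Z = \sum_i (a_i^2 - a_i) + \sum_i (b_i^2 - b_i) - 2\sum_i a_i b_i$, Alice and Bob can compute the first two sums locally, so the task reduces to estimating the cross term $C = \sum_i a_i b_i$ to additive error $\tau = \Theta(t^2\epsilon^2/n)$.

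To estimate $C$ quantumly, Alice prepares the $\lceil\log_2 n\rceil$-qubit state $|\psi_A\rangle = \|a\|_2^{-1}\sum_i a_i |i\rangle$ coherently from her data, and Bob prepares $|\psi_B\rangle$ analogously; the two norms $\|a\|_2,\|b\|_2$ are exchanged using $O(\log t)$ classical bits. Since $\langle\psi_A|\psi_B\rangle = C/(\|a\|_2\|b\|_2)$, it suffices to approximate this overlap, which we will do via a Hadamard (or SWAP) test combined with amplitude estimation: an additive-$\eta$ estimate requires $O(1/\eta)$ applications of the joint state-preparation unitary and its inverse, and each such application is realised in the two-party setting by passing Alice's $O(\log n)$-qubit register to Bob and back. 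Choosing $\eta = \tau/(\|a\|_2\|b\|_2)$ delivers the desired $\tau$-estimate of $C$ at a total communication cost of $\tilde O\bigl(\|a\|_2\|b\|_2 \cdot n / (t^2\epsilon^2)\bigr)$ qubits.

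The remaining step is to show that $\|a\|_2\|b\|_2 = \tilde O(t)$ with high probability. A Poisson moment calculation gives $\mathbb{E}\|a\|_2^2 = t + t^2\|p\|_2^2$, and the hypothesis $\|p\|_2 \leq \gamma t\epsilon^2/n$ together with the information-theoretic bound~(\ref{ic}) forces $\|a\|_2 = \tilde O(\sqrt{t})$ once $\gamma_0$ is chosen sufficiently small. The hard part, and the main technical obstacle, is handling the asymmetric regime: the promise only controls $\min(\|p\|_2,\|q\|_2)$, so $\|b\|_2$ could a priori be much larger than $\sqrt{t}$. I expect to handle this by a short preliminary classical exchange of $\|a\|_2,\|b\|_2$, either using an anomalously large $\|b\|_2$ to route the players into a subroutine that directly outputs ``far'' (since in the equal case $\|q\|_2 = \|p\|_2 \leq \gamma t\epsilon^2/n$ is guaranteed small), or replacing the symmetric Cauchy--Schwarz bound on $C$ by an asymmetric refinement that exploits only the small side; jointly with a careful variance analysis of the Poissonized statistic under the asymmetric low-$\ell_2$ hypothesis, this should keep $\|a\|_2\|b\|_2$ within $\tilde O(t)$ in every case and yield the claimed $\tilde O\bigl(n/(t\epsilon^2)+1\bigr)$ qubit bound.
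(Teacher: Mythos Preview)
Your proposal is correct in outline and reaches the target bound, but it takes a genuinely different route from the paper. The paper keeps the classical AMN protocol intact and replaces only the sketching step: it estimates $\Delta=\|X-Y\|_2^2$ to multiplicative accuracy $(1+\alpha)$ with $\alpha=\Theta(t\epsilon^2/n)$ by running Montanaro's quantum speedup of the AMS estimator, implementing the AMS oracle $i\mapsto\bigl(\sum_j h_i(j)(X_j-Y_j)\bigr)^2$ distributively (Alice sends $\sum_j h_i(j)X_j$, Bob subtracts his half and squares). Correctness then follows verbatim from the AMN analysis, and the cost is $\tilde O(1/\alpha)$ oracle calls times $O(\log n+\log t)$ qubits per call. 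Your approach instead decomposes $Z$ and attacks the cross term $C=\sum_i a_ib_i$ directly via overlap estimation on the count states $|\psi_A\rangle,|\psi_B\rangle$. This is a more elementary quantum primitive (no need for Montanaro's bounded-variance mean estimator), but it buys that simplicity at the price of extra side work: you must bound $\|a\|_2\|b\|_2=\tilde O(t)$, handle the asymmetric case where only one of $\|p\|_2,\|q\|_2$ is small (the paper's Step~2 norm check does exactly this, and your ``route to \textsc{far}'' idea is equivalent), and re-argue concentration of $Z$ to justify the additive target, whereas the paper inherits all of this from~\cite{AMN19} for free. One technical caution: the SWAP test gives $|\langle\psi_A|\psi_B\rangle|^2$ and the square-root can cost a quadratic blowup in precision near zero; make sure you use the Hadamard-test variant so that amplitude estimation yields $\langle\psi_A|\psi_B\rangle$ to additive $\eta$ in $O(1/\eta)$ rounds.
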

Theorem \ref{th:main} shows that a significant advantage (a quadratic improvement in the communication complexity) can be obtained in the quantum setting when at least one of the two distributions has low $l_2$-norm. 

We also obtain the following lower bound, which shows that the upper bound of Theorem \ref{th:main} is optimal, even for $\epsilon=1/2$.

\begin{theorem}\label{th:lb}
There exists an absolute constant $c>1$ such that the following statement holds for any value $t\le n/\log^c n$ and any $\gamma=\Omega(1/\sqrt{\log n})$: any quantum protocol that solves \CTsc{} with high probability requires $\tilde \Omega(n/t)$ qubits of communication. 
\end{theorem}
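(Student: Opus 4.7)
The plan is to apply Sherstov's pattern matrix method, which lifts an approximate-degree lower bound to a quantum communication complexity lower bound. My attack proceeds in three stages. First, I would fix a Boolean function $f : \zoo^N \to \zoo$ whose approximate degree is $\degeps(f) = \tilde\Omega(n/t)$; a natural choice is $f = \mathrm{OR}_N$ with $N = \tilde\Theta((n/t)^2)$, since by Paturi--Nisan--Szegedy $\degeps(\mathrm{OR}_N) = \Theta(\sqrt{N})$.

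Second, I would reduce the pattern matrix lift of $f$ to \CTsc. In this lift, Alice holds $x \in \zoo^{4N}$ together with a subset $V \subseteq [4N]$ of size $N$, while Bob holds $y \in \zoo^N$, and together they must compute $f(x|_V \oplus y)$. I would design distributions $p = p(x, V)$ and $q = q(y)$ on $[n]$ such that: (a)~$p = q$ when $f(x|_V \oplus y) = 0$ and $\|p - q\|_1 \ge 1/2$ otherwise; (b)~$\min(\|p\|_2, \|q\|_2) \le \gamma t / (4n)$, placing the instance in the $\gamma$-restricted class; (c)~each party can locally sample $t$ i.i.d.\ points from its own distribution using only its input and shared randomness. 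A natural geometry is to partition $[n]$ into $\Theta(N)$ small blocks plus a near-uniform ``bulk'' that keeps both $\ell_2$ norms small. Each block encodes one coordinate of the pattern matrix input; within a block, the mass is placed according to the corresponding bit of $x|_V$ on Alice's side and of $y$ on Bob's side. This is the same bulk-and-blocks skeleton underlying the classical lower bound of Andoni, Malkin and Nosatzki~\cite{AMN19}, and it is precisely this skeleton that fixes the feasible parameter window $t \le n/\log^c n$ and $\gamma = \Omega(1/\sqrt{\log n})$.

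Third, by Sherstov's pattern matrix theorem the lifted problem has quantum communication complexity $\Omega(\degeps(f)) = \tilde\Omega(n/t)$. Because the reduction is non-interactive and uses only local computation plus shared randomness, this lower bound transfers to \CTsc, yielding the claim.

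The main obstacle is step two, specifically arranging that $p$ (depending only on $(x, V)$) and $q$ (depending only on $y$) jointly realise the pattern-matrix alternative while remaining samplable from $t$ draws in a way that preserves the structure required by Sherstov's theorem. The subtlety is that Bob does not know $V$, so $q$ must be built to implicitly average over $V$ yet still allow Alice to open an $\ell_1$ gap of $1/2$ whenever $f(x|_V \oplus y) = 1$. Threading this needle inside the $\ell_2$ budget demanded by (b), and doing so robustly under the random sampling process --- so that the encoding of $x|_V$ and $y$ is faithfully recoverable from the $t$ samples --- is the core technical difficulty, and it is precisely what forces the parameter restriction $t \le n/\log^c n$ in the hypothesis.
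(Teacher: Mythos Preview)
Your plan has two concrete breaks. First, the input split you describe is not the pattern matrix: if Alice holds both $x$ and $V$ she can compute $z=x|_V$ locally, and the task collapses to $\mathrm{OR}_N(z\oplus y)$ with $z,y$ on opposite sides---i.e., the complement of equality, whose bounded-error quantum complexity is $O(\log N)$. No $\tilde\Omega(n/t)$ bound can come from that source problem. For the pattern matrix method to bite, $x$ and $(V,w)$ must sit with \emph{different} parties.

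Second, even with the correct split, $\mathrm{OR}_N$ is the wrong inner function for a closeness-testing target. Condition~(a) asks for $p=q$ exactly when $x|_V=y$ and $\|p-q\|_1\ge 1/2$ as soon as a single coordinate disagrees. In any block encoding this forces each of the $N$ coordinates to carry $\Omega(1)$ mass, so $N=O(1)$; and without a block encoding you are asking Alice's distribution $p(x)$ (which cannot depend on $V$) to equal $q(V,y)$ on the exponentially many pairs with $x|_V=y$, which collapses $p$ to a near-constant map. The bulk-and-blocks construction of~\cite{AMN19} you invoke produces $\ell_1$ distance \emph{proportional to the number of disagreeing blocks}; it realises a Gap-Hamming promise, not an $\mathrm{OR}$ promise.

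The paper avoids both issues by taking $\mathrm{PMAJ}_k$ (distinguish $|z|=k/2$ from $|z|=k/2-1$, approximate degree $\Omega(k)$) as the inner function and routing through Gap-Hamming rather than reducing the pattern matrix to closeness testing directly. The $(2k,k,\mathrm{PMAJ}_k)$-pattern matrix is, after a short padding, a restriction of $\mathrm{PromisedGHD}(m,\kappa)$, giving $Q^*_{1/3}=\Omega(\sqrt m)$. Then the existing reduction of~\cite{AMN19} from $\mathrm{PromisedGHD}(m,\kappa)$ with $m=n^2/(t^2\log^3 n)$ and $\kappa=O(\sqrt{\log n})$ to \CTc\ is invoked, and one checks that the hard distributions used there satisfy $\|p\|_2=\|q\|_2=O((t\log n)^{-1/2})$, placing the instance inside \CTsc\ for every $\gamma=\Omega(1/\sqrt{\log n})$.
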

Precisely, the lower bound holds for any $\gamma$ greater than some $\gamma_{LW}=O(\frac{1}{\sqrt{\log n}})$. 
Therefore there is a regime where $\gamma_{LW}\leq \gamma_0$, for which our upper bound is then tight.

\paragraph{Overview of our main techniques.}
Our upper bound (Theorem \ref{th:main}) is obtained by following the framework used in \cite{AMN19}, which relies on the estimator from~\cite{CDVV14} for the $l_2$-distance. Indeed, as suggested by~\cite{BFRSW00,CDVV14} and then extensively studied in~\cite{DK16}, there is a reduction from closeness testing in the $l_1$-distance to closeness testing in the $l_2$-distance. The efficiency of the reduction, however, depends on the $l_2$-norm of the distribution.
The protocol in~\cite{AMN19} thus proceeds in two steps. In the first step, Bob shares some information with Alice about the observed shape of his distribution, so that they can recast their distributions into two distributions $p'$ and $q'$ that have smaller $l_2$-norm while preserving the $l_1$-distance (i.e., $||p'-q'||_1=||p-q||_1$). In the second step, Alice and Bob use the reduction to closeness testing in the $l_2$-distance mentioned and implement the estimator of \cite{CDVV14} in the two-party setting. This estimator requires estimating with good precision the $l_2$-distance between two vectors. This is done by using a two-party implementation of the sketching method by Alon, Matias and Szegedy \cite{AMS99}. 

For the case of low-norm distribution (more precisely, when considering the problem \CTs{} with $\gamma$ constant), the first step is unnecessary: the two distributions already have  a low enough $l_2$-norm, so that the reduction to closeness testing in the $l_2$-distance can be used without any preprocessing. We thus only need to show how to implement the second step from \cite{AMN19} more efficiently using quantum communication. The key idea is to use the quantum algorithm by Montanaro~\cite{Montanaro16} which gives a quadratic speedup over the classical sketching method from \cite{AMS99} in the query complexity model. We show how to adapt this quantum algorithm to the two-party setting in Section \ref{sub:qAMS}. 

Our lower bound (Theorem \ref{th:lb}) first applies the same reduction as in \cite{AMN19}, which reduces some specific version of the Gap-Hamming distance to \CTsc{}. In the classical case, \cite{AMN19} showed that the communication complexity of that version of the Gap-Hamming distance is $\tilde \Omega((n/t)^2)$ bits. Our main technical contribution proves that the quantum communication complexity of this problem is $\tilde \Omega(n/t)$. We use the pattern matrix method \cite{SherstovSICOMP11}. To obtain our lower bound, we show that the pattern matrix method, which is generally formulated only for total functions, can be generalized to partial functions. 

\paragraph{Relation with known quantum advantages in testing and learning.}
Several quantum algorithms have been designed for property testing and learning theory (we refer to \cite{Arunachalam+17} and \cite{Montanaro+16} for excellent surveys of these fields).  In most settings considered so far in quantum learning theory, however, the quantum algorithms crucially exploit the fact that the data can be accessed in a quantum way (e.g., we can query a quantum superposition of samples), which makes it difficult to directly compare the performance of quantum algorithms with the performance of classical methods (which can only access the data in a classical way). The results of the present paper show a quantum advantage, in terms of communication cost, for the setting where both classical and quantum protocols can access the data in the same way -- the input is given as a set of classical samples.

\paragraph{Open problem.} An intriguing question is whether a similar quantum advantage is achievable when both input distributions have higher $l_2$-norm, i.e., whether the upper bound we obtain in Theorem \ref{th:main} holds not only for constant~$\gamma$ but also for larger values of $\gamma$. Currently, we do not know how to improve the complexity of the first part of the classical protocol from \cite{AMN19}, which (as mentioned above) converts the input distributions into distributions of sufficiently small $\ell_2$-norm, using quantum communication. We left this question as an open problem.  
\section{Preliminaries}

\subsection{Definitions and Notations}

A typical communication task for Alice and Bob is to compute (sometimes only with some probability of success) a function $f$ on some inputs $x,y$ where $x$ is given to Alice and $y$ to Bob.
A communication protocol is an algorithmic description of message sending between Alice and Bob that solves the task for any possible pair of inputs.
The communication complexity~\cite{KN97} of such function is the minimum required numbers of bits the most efficient protocol solving the task must exchange in the worst case (regarding inputs).

The quantum communication complexity~\cite{Wolf02,Brassard04} of a function is the equivalent using qubits instead of bits.
Qubits correspond to elements of some Hilbert space of dimension $2$. We will use the bra-ket notation $\ket{\phi}_{R}$ to denote a qubit (and by extension an $n$-qubit string) $\phi$ of a register $R$.

As described in the introduction, Alice's input consists of $t$ samples from a discrete distribution $p\colon [n]\to (0,1)$. Bob's input consists of $t$ samples from a discrete distribution $q\colon [n]\to (0,1)$.
We call $X_i$ (resp.~$Y_i$) the number of samples of Alice (resp.~Bob) corresponding to element $i$. We call $X,Y\in [t]^n$ the occurrence vectors of Alice and Bob, i.e., the vectors with $i$-th coordinate $X_i$ and $Y_i$, respectively.

For a vector $x\in\mathbb{R}^n$, we will denote by $||\cdot||_1$ the $l_1$-norm, which is defined as $||x||_1=\sum_i^n |x_i|$, and denote by $||\cdot||_2$ the $l_2$-norm, which is defined as $||x||_2=\sqrt{\sum_i^n x_i^2}$.
We use $\tilde{O},\tilde{\Omega}$ instead of $O,\Omega$ when we neglect factors of logarithmic order in the parameters of the problem ($n,t,\epsilon$).
We denote by $Poi(\lambda)$ the Poisson distribution with parameter $\lambda\in\mathbb{N}$.
%
\subsection{CDVV~\cite{CDVV14} Estimator}
The idea behind the estimator from~\cite{CDVV14} is similar to estimation using collisions~\cite{BFRSW13}, except it assumes Poisson sampling for getting some independence that simplifies the analysis, and therefore needs some corrective terms to shift the mean so that the estimation is unbiased. It is defined by first drawing a number $M\sim{Poi(\lambda)}$, 
where $\lambda$ is some parameter high enough, and then taking $M$ arbitrary samples on each side and computing using the occurrence vectors $X,Y$ of those samples:

\begin{align*}
    Z&=\frac{\sqrt{\sum_i ((X_i-Y_i)^2-X_i-Y_i)}}{M}\\
    &= \frac{\sqrt{\sum_i (X_i-Y_i)^2  -2M}}{M}.
\end{align*}

\subsection{Classical Protocol}
We now describe the protocol from \cite{AMN19} for the case of small $l_2$-norm. 

The main idea behind the protocol is based on the CDVV estimator described in the previous subsection, combined with a reduction from $l_1$-distance estimation to $l_2$-distance estimation.
Considering as well the errors of the estimator, after rescaling and shifting from $Z$, Andoni, Malkin and Nosatzki found that it is enough to compare some approximation of the term $\Delta=||X-Y||_2^2$ to some threshold $\tau$ to distinguish the two cases. As discussed in the introduction, the original protocol from \cite{AMN19} has a communication step to recast the probability distributions into ones with smaller $l_2$-norms. Since this step is not necessary for the case of distributions with small $l_2$-norms, we omit it in the following description.

\begin{algorithm}[H]
   \caption{\bf Classical Protocol for Distribution Closeness Testing from \cite{AMN19}}\label{protocol}
\begin{algorithmic}[1]
\State Fix $\alpha=\Theta(\frac{t\epsilon^2}{n}+1)$;
\State Alice and Bob each estimate $||p||_2$ and $||q||_2$ up to a factor 2; if the two estimates are not within a factor 4, output ``$\epsilon$-FAR'';\label{alg:line:check}
\State Alice and Bob approximate $\Delta=||X-Y||_2^2$ up to a $(1+\alpha)$ factor using standard techniques;
\State If $\Delta$ is less than $\tau=\frac{\epsilon^2t^2}{2n}+2t$ output ``SAME", and otherwise output ``$\epsilon$-FAR'';
\end{algorithmic}
\end{algorithm}



The analysis from {\cite{AMN19}, in particular Lemma 6, shows the correctness of the protocol. More precisely, the following statement can be obtained for the case of input distributions with low $l_2$-norms.
\begin{theorem}{\cite{AMN19}}\label{th:amn19UB}
There exists an absolute constant $\gamma_0$ such that the following holds: for any input distributions $p$ and $q$ such that $\min(||p||_2, ||q||_2)\le \gamma_0 t\epsilon^2/ n$,
the above protocol correctly distinguish between the case $p=q$ and the case $||p-q||_1\ge \epsilon$ with probability at least 2/3.
\end{theorem}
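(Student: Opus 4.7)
I would run the standard first/second-moment analysis of the CDVV estimator under Poisson sampling, specialized to the low-$l_2$-norm setting. With $X_i \sim \mathrm{Poi}(tp_i)$ and $Y_i \sim \mathrm{Poi}(tq_i)$ independent, using $E[X_i^2] = tp_i + (tp_i)^2$ one gets $E[\Delta] = 2t + t^2\|p-q\|_2^2$. Hence $E[\Delta] = 2t$ in the SAME case, while in the FAR case Cauchy--Schwarz gives $\|p-q\|_2 \ge \|p-q\|_1/\sqrt{n} \ge \epsilon/\sqrt{n}$, so $E[\Delta] \ge 2t + t^2\epsilon^2/n$. The threshold $\tau = 2t + t^2\epsilon^2/(2n)$ sits at the midpoint, leaving a signal gap of $t^2\epsilon^2/(2n)$ on each side.

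Next, I would exploit the $l_2$-norm check in line~\ref{alg:line:check}. Once the protocol proceeds past it, the estimated norms agree up to a factor~$4$, so the hypothesis $\min(\|p\|_2,\|q\|_2) \le \gamma_0 t\epsilon^2/n$ upgrades to $\max(\|p\|_2,\|q\|_2) = O(\gamma_0 t\epsilon^2/n)$; this is automatic in the SAME case (where $\|p\|_2 = \|q\|_2$), while in the FAR case a failure of the check already produces the correct output. A CDVV-style variance computation for sums of squared Skellam differences then yields, up to constants, $\mathrm{Var}[\Delta] = O(t + t^2(\|p\|_2^2 + \|q\|_2^2) + t^3(\|p\|_2 + \|q\|_2)\|p-q\|_2^2)$. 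Plugging in the both-low-norm bound, together with $\|p-q\|_2^2 \le 2(\|p\|_2^2 + \|q\|_2^2)$, gives $\sqrt{\mathrm{Var}[\Delta]} = O(\gamma_0 t^2\epsilon^2/n)$, which is at most $t^2\epsilon^2/(8n)$ once $\gamma_0$ is taken small enough.

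Finally, Chebyshev's inequality places $\Delta$ within $t^2\epsilon^2/(4n)$ of $E[\Delta]$ with constant probability, and the $(1+\alpha)$-approximation from the last step of the protocol with $\alpha = \Theta(t\epsilon^2/n + 1)$ contributes an additional absolute error of order $\alpha \cdot \Delta$; since $\Delta = O(t + t^2\epsilon^2/n)$ in both cases, this can be forced below $t^2\epsilon^2/(8n)$ by tuning the hidden constant in $\alpha$. The two error sources combined keep the approximated $\hat\Delta$ on the correct side of $\tau$ with probability at least $2/3$. The main obstacle is the variance bound: the Skellam moment bookkeeping must simultaneously control both regimes, and in particular its dominant $t^2(\|p\|_2^2 + \|q\|_2^2)$ term, which is precisely why the low-$l_2$-norm guarantee on \emph{both} distributions (enforced by line~\ref{alg:line:check}) is indispensable; a secondary subtlety is calibrating $\alpha$ against the same scale $t\epsilon^2/n$ that controls the signal gap.
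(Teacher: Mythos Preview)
The paper does not give its own proof of this theorem: it is stated as a citation of~\cite{AMN19} (specifically, the text says ``The analysis from~\cite{AMN19}, in particular Lemma~6, shows the correctness of the protocol''), and no argument is reproduced. So there is nothing in the paper to compare your proposal against beyond that pointer.

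That said, your sketch is a faithful reconstruction of the CDVV/AMN19 analysis and is the right outline. One small imprecision: you write $\sqrt{\mathrm{Var}[\Delta]} = O(\gamma_0\, t^2\epsilon^2/n)$, but the $O(t)$ term in the variance contributes $O(\sqrt t)$ to the standard deviation, and this does \emph{not} scale with~$\gamma_0$; it is instead dominated by the signal gap $t^2\epsilon^2/(2n)$ precisely because of the information-theoretic assumption~(\ref{ic}) on~$t$. So the argument should separate the two mechanisms: the sample-size lower bound~(\ref{ic}) controls the $O(t)$ term, while the smallness of~$\gamma_0$ controls the $t^2(\|p\|_2^2+\|q\|_2^2)$ and higher-order terms. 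With that clarification, the plan is correct.
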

The communication complexity is dominated by the third step, which requires $\tilde{O}\left(1/\alpha^2\right)=\tilde{O}\left(n^2/(t^2\epsilon^4)+1\right)$ bits.

\section{Quantum Protocol}
In this section we describe our quantum protocol for the problem \CTs, and prove Theorem \ref{th:main}.
\subsection{Description of the Whole Protocol}
\vspace{-4mm}
\begin{algorithm}[H]
   \caption{\bf Quantum Protocol for Distribution Closeness Testing}\label{qprotocol}
\begin{algorithmic}[1]
\State Fix $\alpha=\Theta(\frac{t\epsilon^2}{n}+1)$;
\State Alice and Bob each estimate $||p||_2$ and $||q||_2$ up to a factor 2; if the two estimates are not within a factor 4, output ``$\epsilon$-FAR'';
        \State Alice and Bob approximate $\Delta=||X-Y||_2^2$ up to a $(1+\alpha)$ factor using the procedure of Section $\ref{sub:qAMS}$;
        \State If $\Delta$ is less than $\tau=\frac{\epsilon^2t^2}{2n}+2t$ output ``SAME", and otherwise output ``$\epsilon$-FAR'';
\end{algorithmic}
\end{algorithm}
The communication complexity is again dominated by the third step, which requires only $\tilde{O}\left(1/\alpha\right)=\tilde{O}\left(n/(t\epsilon^2)+1\right)$ qubits, as described in the next subsection. The correctness is guaranteed by the analysis of Theorem \ref{th:amn19UB}, since the quantum protocol performs identical calculations as the classical protocol. 
This proves Theorem \ref{th:main}.
 
\subsection{Montanaro Approximation}\label{sub:qAMS}
The classical protocol \cite{AMN19} uses standard techniques, such as the AMS algorithm~\cite{AMS99}, in order to approximate $\Delta=||X-Y||_2^2$. 
The AMS algorithm uses a family $\mathcal{H}$ of $O(n^2)$ hash functions $h_i : [n]\rightarrow \{-1,+1\}$ that are 4-wise independent.
Given a list of numbers $l=(l_1.\dots,l_n)$, the AMS algorithm gives an estimate of 
$||l||_2^2$ by computing random estimates $f(i)$ with the following subroutine many times and taking the median of the results.\footnote{The idea behind the AMS algorithm is that by developing the square, the ``crossed" product terms' influence should vanish, i.e.,
$\mathbb{E}[h_i(j)l_j\cdot h_i(j')l_{j'}]=0$ for $j\neq j'$ because then $\mathbb{E}[h_i(j)\cdot h_i(j')]=0$, while the terms 
$h_i(j)l_j\cdot h_i(j)l_{j}=l_j^2$ will always stay.}

\begin{algorithm}[H]
   \caption{\bf Alon-Matias-Szegedy algorithmic subroutine}\label{AMS}
\begin{algorithmic}[1]
\State Draw a random index $i$ to choose a hash function $h_i\in\mathcal{H}$;
\State Return $f(i)=(\sum_{j=1}^n h_i(j)\cdot l_j)^2$;
\end{algorithmic}
\end{algorithm}\vspace{-3mm}

\noindent In the classical setting, this subroutine has to be repeated $\tilde{O}(1/\alpha^2)$ times to get a $(1+\alpha)$-approximation.

Montanaro showed how to achieve the same approximation quantumly using this subroutine only $\tilde{O}(1/\alpha)$ times (see Theorems 12 and 14 in \cite{Montanaro16}). The subroutine, however, needs to be called in superposition, i.e., Montanaro's approach requires a quantum oracle $O_f$ that performs the following map:
$$O_f: \ket{i}\ket{y}\rightarrow \ket{i}\ket{y+f(i)}.$$
It also requires access to its inverse $O_f^{-1}$.

In our communication setting, we want to use this approach with $l:=X-Y$.  A difficulty is that the data is split between the two parties: only Alice knows~$X$ and only Bob knows $Y$.  We now explain how to overcome this difficulty.
For a particular index $i$, Alice can compute $\sigma_a(i) = \sum_{j=1}^n h_i(j) X_j$, 
and then transmit it to Bob.
Bob can similarly do his own computation $\sigma_b(i) = \sum_{j=1}^n h_i(j) Y_j$, then subtract $\sigma_a(i)$ and square in order to get 
\begin{equation}\label{eq:diff}
(\sigma_a(i)-\sigma_b(i))^2 = 
\left(\sum_{j=1}^n h_i(j)\cdot (X_j-Y_j)\right)^2=f(i).
\end{equation}
We describe below our implementation of the oracle $O_f$, based on these ideas. In the following description, Bob's input is the quantum state $\ket{i}_{I}\ket{y}_{Y}$ for some bit-strings $i$ and $y$, where $I$ and $Y$ are two quantum registers. 

\begin{algorithm}[H]
   \caption{\bf Protocol for building the oracle 
   $O_f$}\label{OracleProtocol}
\begin{algorithmic}[1]
\State Bob maps $\ket{i}_{I}\ket{y}_{Y}$ to $\ket{i}_{I}\ket{y}_{Y}\ket{\sigma_b(i)}_{B}$ and sends register $I$ to Alice.
\State Alice creates another register $A$, computes the value $\sigma_a(i)$ in it, and sends the two registers $A$ and $I$ to Bob. At the end of this step, Bob thus owns 
\[
\ket{i}_{I}\ket{y}_{Y}\ket{\sigma_a(i)}_{A}\ket{\sigma_b(i)}_{B}.
\]
\State Bob performs the computation of Equation (\ref{eq:diff}) using the arguments from $A$ and $B$, to get the state
\[
\ket{i}_{I}\ket{y+f(i)}_{Y}\ket{ \sigma_a(i)}_{A}\ket{ \sigma_b(i)}_{B}.
\]
\State Bob erases the contents of register $B$, and sends registers $A$ and $I$ to Alice.
\State Alice erases the contents of register $A$ and sends back register $I$ to Bob.
\end{algorithmic}
\end{algorithm}

Note that it is the linear property of the AMS computation (more precisely, Equation (\ref{eq:diff})) that allows the approximation to be computed even when the data are shared by several parties. Transmitting register $I$ requires $O(\log n)$ qubits. Transmitting register $A$ requires $O(\log t)$ qubits. The overall communication complexity of the oracle protocol is thus $O(\log n + \log t)$. 

An implementation of the inverse $O_f^{-1}$ can be obtained similarly, by subtracting instead of adding the value of $f(i)$ at Step 3. 
We can thus apply Montanaro's algorithm \cite{Montanaro16}, simply by replacing each oracle query in Montanaro's algorithm by our distributed implementation of $O_f$ (or $O_f^{-1}$). This enables us to obtain, with high probability, an $(1+\alpha)$-approximation of $||l||_2^2=\Delta$ using 
\[
\tilde O((1/\alpha)(\log n + \log t))=\tilde O(1/\alpha)
\]
qubits of communication, as claimed.
\section{Quantum Lower Bound}\label{sec:LW}
In this section we prove Theorem \ref{th:lb}.

\subsection{Hamming Reduction}
For bit-strings $x,y\in \{0,1\}^n$, we denote by $x\cap y$ the set of indices where $x$ and~$y$ both have a one, and  write $|x\cap y|$ for its size.
Note that for bit-strings the $l_1$-norm corresponds to Hamming weight, i.e., the number of ones in the string, and the $l_1$-distance corresponds to the Hamming distance.

In the classical communication setting, Andoni, Malkin and Nosatzki~\cite{AMN19} proved a lower bound for some closeness testing problem by considering the following problem involving the Hamming distance of two binary strings.

\begin{center}
\fbox{
\begin{minipage}{11 cm} 
Let $n\geq 1$ be a multiple of $4$. Let $\beta=\beta(n)=\sqrt{n/32}$,  and $\kappa>1$.   
With probability at least $0.9$, for $x,y\in\{0,1\}^n$ with $||x||_1=||y||_1=n/2$, 
distinguish between the case where $||x-y||_1=n/2$ versus $||x-y||_1-n/2\in[\beta,\kappa\beta]$. 
\end{minipage}
}
\end{center}

Notice that if $||x||_1=||y||_1=n/2$ then the equality 
\[
||x-y||_1=2(n/2-|x\cap y|)=n-2|x\cap y|
\]
holds. 
The above problem then can be reformulated as the following communication problem that we call PromisedGHD($n,\kappa$): 


\begin{center}
\fbox{
\begin{minipage}{11 cm} 
PromisedGHD($n,\kappa$)\\
For $x,y\in\{0,1\}^n$, where $n$ is a multiple of $4$, with $||x||_1=||y||_1=n/2$, $\beta=\beta(n)=\sqrt{n/32}$,  and $\kappa>1$, with probability at least $0.9$
distinguish between $|x\cap y|=\frac{n}{4}$ 
and $|x\cap y|\in[\frac{n}{4}-\frac{\kappa\beta}{2},\frac{n}{4}-\frac{\beta}{2}]$.
\end{minipage}
}
\end{center}
 
 Particularly, the reduction of~\cite{AMN19} only concerns the regime where  $\kappa=O(\sqrt{\log n})$.
 
The main result of this subsection is:
\begin{theorem}\label{th:PGHDlw}
The quantum communication complexity of
PromisedGHD($n,\kappa$) 
is $\Omega(\sqrt{n})$.
\end{theorem}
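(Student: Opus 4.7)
The plan is to apply the pattern matrix method, generalized to partial Boolean functions as announced in the introduction, so as to reduce the lower bound to controlling the approximate degree of an auxiliary symmetric function that captures the hard core of PromisedGHD$(n,\kappa)$. The starting observation is that the answer depends only on $|x\cap y|$, so the problem is a communication lift of a partial symmetric Boolean function; this makes it natural to factor the lower bound into an approximation-theoretic step (approximate degree) and a lifting step (pattern matrix).

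Concretely, I would introduce the partial function $f\colon\{0,1\}^n\to\{0,1\}$ with domain $D_0\cup D_1$, where $D_0=\{z:|z|=n/4\}$ and $D_1=\{z:|z|\in[n/4-\kappa\beta/2,\,n/4-\beta/2]\}$, by setting $f\equiv 0$ on $D_0$ and $f\equiv 1$ on $D_1$. I would then show $\widetilde{\deg}(f)=\Omega(\sqrt{n})$ as follows: by symmetrization, any bounded approximating polynomial of degree $d$ collapses to a univariate polynomial $\tilde p$ on $\{0,1,\dots,n\}$ satisfying $\tilde p(n/4)\in[-1/3,1/3]$ and $\tilde p(k)\in[2/3,4/3]$ for every $k\in D_1$, and bounded by a constant on $\{0,\dots,n\}$ (hence, by a standard extremal argument, on $[0,n]$). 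Rescaling $[0,n]$ to $[-1,1]$ and applying Bernstein's inequality yields $|\tilde p\,'(k)|=O(d/n)$ throughout a neighbourhood of $n/4$, so the variation of $\tilde p$ over an interval of length $\Theta(\sqrt{n})$ is $O(d/\sqrt{n})$; this must be at least $1/3$, forcing $d=\Omega(\sqrt{n})$, uniformly in $\kappa$.

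Next I would lift this approximate-degree bound to a quantum communication lower bound through the pattern matrix method. In the total case, the method shows that the quantum communication complexity of the pattern matrix functional $F(x,(V,w))=f(x|_V\oplus w)$, with Alice holding $x\in\{0,1\}^N$ for $N=O(n)$ and Bob holding $(V,w)$, is $\Omega(\widetilde{\deg}(f))$, via a dual polynomial for $f$ fed into the generalized discrepancy method. The same argument goes through for the partial function $f$ defined above, provided the dual witness is constructed with respect to the restricted domain $D_0\cup D_1$ and the hard input distribution is supported there; this is the partial-function variant announced in the introduction. Applying it yields a quantum communication lower bound of $\Omega(\sqrt{n})$ for $F$.

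Finally, I would reduce $F$ to PromisedGHD$(n,\kappa)$ by a local transformation. Given $x$ and $(V,w)$, Alice and Bob construct balanced strings $X,Y\in\{0,1\}^{n'}$ of length $n'=\Theta(n)$ such that $|X\cap Y|$ is an affine function of $|x|_V\oplus w|$, mapping $D_0$ to $|X\cap Y|=n'/4$ and $D_1$ into the PromisedGHD window; the Hamming-weight constraints $\|X\|_1=\|Y\|_1=n'/2$ are enforced by appending a fixed complementary block to each party's string, which is purely local. The main obstacle in this whole plan is the rigorous partial-function version of the pattern matrix method---producing a dual polynomial for $f$ with the right support and mass conditions on $D_0\cup D_1$---together with the bookkeeping needed so that the local reduction stays within the weight and gap promises while blowing up $n$ only by a constant factor.
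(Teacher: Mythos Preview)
Your plan is sound and uses the same three ingredients as the paper---an approximate-degree lower bound for a partial symmetric function, the partial-function pattern matrix method, and a local reduction to PromisedGHD---but you place the $\sqrt n$ in a different slot. You bake a $\Theta(\sqrt n)$ gap into $f$ on $n$ bits and extract $\widetilde{\deg}(f)=\Omega(\sqrt n)$ via Bernstein, then reduce the resulting pattern matrix to PromisedGHD on $\Theta(n)$ bits by appending a fixed block. The paper instead applies the pattern matrix to $\PMAJ_k$ (distinguishing $|z|=k/2$ from $|z|=k/2-1$) on only $k=\Theta(\sqrt n)$ bits; since $\deg_{1/3}(\PMAJ_k)=\Omega(k)$ is a known citable fact, this yields an $\Omega(k)$ lower bound for the balanced problem $F_m$ of distinguishing $|X\cap Y|=m/4$ from $m/4-1$ on $m=\Theta(k)$ bits, and the square root only enters through a separate \emph{padding} reduction that repeats the $m$-bit instance $m$ times to produce an instance of PromisedGHD$(m^2,\kappa)$. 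The padding has the pleasant side effect of matching the gap parameter automatically (a gap of $1$ becomes a gap of $m$, which lies in $[\beta/2,\kappa\beta/2]$ since $\beta=m/\sqrt{32}$), whereas in your route the bookkeeping you flag is genuine: if $D_1$ is the \emph{full} interval $[n/4-\kappa\beta/2,\,n/4-\beta/2]$ and the local reduction blows up to $n'=cn>n$, the upper endpoint $n'/4-\beta/2$ falls outside the PromisedGHD$(n',\kappa)$ window because $\beta'=\sqrt{c}\,\beta>\beta$. The easy fix is to take $D_1$ to be a single Hamming weight $n/4-b$ with $b$ chosen so that $b\in[\beta'/2,\kappa\beta'/2]$ after the blow-up; your Bernstein argument still gives $\widetilde{\deg}(f)=\Omega(n/b)=\Omega(\sqrt n)$, so nothing is lost.
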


In order to prove Theorem \ref{th:PGHDlw}, we will consider a similar problem defined on smaller inputs. More precisely, define the following problem SmallPGHD($n',g$):

\begin{center}
\fbox{
\begin{minipage}{11 cm} 
SmallPGHD($n',g$)\\
For $x',y'\in\{0,1\}^{n'}$ where $n'$ is a multiple of $4$, with $||x' ||_1=||y' ||_1=n'/2$, with probability at least $2/3$
distinguish between $|x'\cap y'|=n'/4$ 
and $|x'\cap y'|\in[n'/4-g,n'/4-1]$
\end{minipage}
}
\end{center}

We first show a reduction from SmallPGHD($n',g$) to PromisedGHD($n,\kappa$) for appropriate parameters. 

%

\begin{lemma}
For any 
$g\leq\frac{\kappa}{2\sqrt{32}}$, 
SmallPGHD($n',g$) reduces to PromisedGHD($n'^2,\kappa$).
\end{lemma}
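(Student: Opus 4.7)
The plan is to exhibit a local, no-communication reduction based on bit replication: each player blows up their $n'$-bit input into an $n'^2$-bit input by replacing every bit with a block of $n'$ identical copies. Formally, Alice maps $x' \in \{0,1\}^{n'}$ to $x \in \{0,1\}^{n'^2}$ defined by $x_{(i,j)} = x'_i$ for all $j \in [n']$, and Bob does the same to $y'$. Since both parties apply this map independently to their own inputs, any quantum protocol for PromisedGHD$(n'^2,\kappa)$ operating on $(x,y)$ immediately yields a protocol for SmallPGHD$(n',g)$ on $(x',y')$ with the same communication cost. The required success probability of SmallPGHD is $2/3$, comfortably below the $0.9$ guaranteed by PromisedGHD, so no amplification is needed.

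The routine bookkeeping I would then carry out is twofold: first, check that replication preserves the promised Hamming-weight constraint, turning $\|x'\|_1 = n'/2$ into $\|x\|_1 = n'^2/2$; second, observe that the intersection scales linearly, $|x \cap y| = n' \cdot |x' \cap y'|$, since only indices with $x'_i = y'_i = 1$ contribute to $x \cap y$, and each such index contributes exactly $n'$ pairs.

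The substantive step, and the only place where the hypothesis on $g$ is actually used, is to verify that the YES/NO promise windows line up correctly. In the YES case, $|x' \cap y'| = n'/4$ maps to $|x \cap y| = n'^2/4$, which is exactly the YES instance of PromisedGHD with $n = n'^2$. In the NO case, the induced range is $[n'^2/4 - n'g,\, n'^2/4 - n']$, and I would check that this sits inside the PromisedGHD target window $[n'^2/4 - \kappa \beta/2,\, n'^2/4 - \beta/2]$, where $\beta = \sqrt{n'^2/32} = n'/\sqrt{32}$. The upper-endpoint containment reduces to $n' \geq n'/(2\sqrt{32})$, which is immediate, while the lower-endpoint containment reduces precisely to the inequality $g \leq \kappa/(2\sqrt{32})$ assumed in the statement.

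There is no real obstacle: the whole argument is a parameter-alignment check around a very natural replication map. The one thing worth stressing is that the bound $g \leq \kappa/(2\sqrt{32})$ is exactly tight for this reduction, in the sense that it is the unique place where the hypothesis is invoked, and relaxing it would push the NO range outside the promise window of PromisedGHD$(n'^2,\kappa)$.
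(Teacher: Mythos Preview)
Your proposal is correct and follows essentially the same approach as the paper: both use the replication map that copies each bit $n'$ times to pass from $n'$-bit inputs to $n'^2$-bit inputs, then verify that the YES/NO promise intervals line up, with the hypothesis $g \le \kappa/(2\sqrt{32})$ used exactly at the lower endpoint. Your write-up is in fact slightly more careful, as you explicitly note that the Hamming-weight promise and the success-probability requirement are preserved.
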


\begin{proof}
Assume $x',y'$ are inputs of size $n'$ to the SmallPGHD($n',g$) problem. By repeating them $n'$ times in a padding fashion, we build inputs $x,y$ for the PromisedGHD($n,\kappa$) problem where $n=n'^2$ and therefore $\beta=\frac{n'}{\sqrt{32}}$. 
If $|x'\cap y'|=n'/4$, we have $|x\cap y|=n'^2/4=n/4$.
Otherwise, if $|x'\cap y'|\in[\frac{n'}{4}-g,\frac{n'}{4}-1]$,  we have $|x\cap y|\in[\frac{n}{4}-\frac{\kappa\beta}{2},\frac{n}{4}-\frac{\beta}{2}]$ as long as $g\leq\frac{\kappa}{2\sqrt{32}}$.
Indeed, for the upper part of the interval:
\[
 \frac{n}{4}-\frac{\beta}{2} =\frac{n'^2}{4}-\frac{n'}{2\sqrt{32}}
  \geq n'(\frac{n'}{4}-1)
\]
while for the lower part, we need:
\[
\frac{n}{4}-\frac{\kappa\beta}{2} =\frac{n'^2}{4}-\frac{\kappa n'}{2\sqrt{32}}
\leq n'(\frac{n'}{4}-g),
\]
which is equivalent to the bound on $g$ mentioned just above.
\end{proof}

The hardness of SmallPGHD($n',g$) will follow from the hardness of a more restricted problem stated in Theorem~\ref{thm:sherstov}.
The latter theorem is the main technical contribution of this section, and we devote Section~\ref{sec:mainTechnical} to its proof.

%
%

\subsection{Main Technical Result}
\label{sec:mainTechnical}

We will model 
partial functions as mappings $f\colon X\to
\Re\cup\{*\}$, where $X$ is a finite set and 
the range element $*$ represents undefined output.
We let $\dom f= \{x: f(x)\ne *\}.$
It will be convenient here to represent the Boolean
values ``true'' and
``false'' by $-1$ and $+1,$ respectively. This
departure from the classical representation (of using $0$
and $1$) has no effect on quantum
communication complexity. For a communication problem
$F\colon X\times Y\to\{-1,+1,*\},$ we let $Q^*_\epsilon(F)$ denote
the $\epsilon$-error quantum communication complexity
of $F$ with arbitrary prior entanglement. Note that an
$\epsilon$-error protocol for $F$ is allowed to behave
arbitrarily on inputs outside $\dom F.$

We are now ready to prove the main technical result used in the proof of Theorem~\ref{th:lb}.

\begin{theorem}
\label{thm:sherstov}
Let $n$ be an integer divisible by $4.$ Consider the
partial communication problem
$F_n\colon\zoon\times\zoon\to\{-1,+1,*\}$ given by 
\begin{align*}
F_n(x,y)=\begin{cases}
-1 &\text{if $|x|=|y|=n/2$ and $|x\cap y|=n/4,$}\\
+1 &\text{if $|x|=|y|=n/2$ and $|x\cap y|=n/4-1,$}\\
* &\text{otherwise.}
\end{cases}
\end{align*}
Then
$Q_{1/3}^*(F_n)=\Omega(n).$
\end{theorem}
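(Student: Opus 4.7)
The strategy is to apply the pattern matrix method of Sherstov~\cite{SherstovSICOMP11}, generalized to partial functions. The plan is to identify an inner partial function $g$ of large approximate degree and to lift $g$ to the communication problem $F_n$ via the selector structure: whenever $|y|=n/2$, the intersection size $|x\cap y|$ equals the Hamming weight of $x$ restricted to $\operatorname{supp}(y)$. A natural choice is $g\colon\zoo^{n/2}\to\{-1,+1,*\}$ defined by $g(z)=-1$ if $|z|=n/4$, $g(z)=+1$ if $|z|=n/4-1$, and $g(z)=*$ otherwise. Under the promise $|x|=|y|=n/2$, evaluating $F_n(x,y)$ amounts to evaluating $g$ on the substring of $x$ selected by $y$, so $F_n$ embeds a pattern-matrix-style lifting of $g$.

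The first step is to establish $\widetilde{\deg}_{1/3}(g)=\Omega(n)$. After the usual symmetrization, this reduces to lower-bounding the degree of a univariate polynomial $q$ on $\{0,1,\dots,n/2\}$ with $|q|\le 1$ everywhere and $|q(n/4)-q(n/4-1)|\ge 4/3$. Bernstein's inequality at the interior point $n/4$ gives $|q'|=O(\deg q/n)$ there, so $\deg q=\Omega(n)$. By LP duality this yields a dual polynomial $\psi\colon\zoo^{n/2}\to\Re$ with $\|\psi\|_1\le 1$, $\sum_{z\in\dom g}\psi(z)\,g(z)\ge\Omega(1)$, and $\psi$ orthogonal to every monomial of degree $<\Omega(n)$. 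By standard manipulations (multiplying by an indicator of $\dom g$ and renormalizing) we may further arrange that $\psi(z)=0$ for $z\notin\dom g$.

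The second step is to lift $\psi$ to a dual witness $\Psi$ for the communication matrix of $F_n$. Following the pattern matrix method, define $\Psi(x,y)$ to be, up to normalization, $\psi\bigl(x|_{\operatorname{supp}(y)}\bigr)$ on balanced pairs $|x|=|y|=n/2$ and zero elsewhere, averaged over permutations of $[n]$ so as to respect the selector symmetry. Two properties must then be verified: (i) $\sum_{(x,y)\in\dom F_n}\Psi(x,y)\,F_n(x,y)=\Omega(1)$, which follows directly from the correlation of $\psi$ with $g$; and (ii) $\langle\Psi,M\rangle\le 2^{-\Omega(n)}$ for every sign matrix $M$ computable with small quantum communication cost, which follows from the orthogonality of $\psi$ to low-degree monomials via the spectral/Fourier-analytic core of Sherstov's argument. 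Invoking the generalized discrepancy method for quantum protocols with prior entanglement then gives $Q^*_{1/3}(F_n)=\Omega(n)$.

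The main obstacle is the spectral step in the lifting for balanced selectors. Sherstov's original pattern matrix theorem uses a selector space $[t]^k$ whose product structure yields a clean Fourier diagonalization, whereas here the selectors $y$ are balanced bit-strings and both inputs are confined to a promise set. Extending the spectral bound to this partial-function setting requires either embedding balanced selectors into a standard product-selector setup with only constant overhead, or redoing the computation directly using the representation theory of the Johnson scheme on weight-$n/2$ strings. Verifying that the orthogonality of $\psi$ to low-degree monomials translates cleanly into the required rank/spectral bound on $\Psi$, without losing factors in the exponent, is where the bulk of the technical work lies.
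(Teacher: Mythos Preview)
Your overall plan matches the paper's: lower-bound the approximate degree of the promise-majority function and lift via the pattern matrix method. But the paper resolves what you call ``the main obstacle'' by the first of your two suggested routes---embedding a \emph{standard} product-structured pattern matrix into $F_n$---and the embedding is short and explicit, so there is no Johnson-scheme work at all. Concretely, set $k=n/6$ and let $P$ be the ordinary $(2k,k,\PMAJ_k)$-pattern matrix, where $\PMAJ_k$ is your $g$ on $k$ bits. Since $\deg_{1/3}(\PMAJ_k)=\Omega(k)$, the off-the-shelf pattern matrix theorem (extended verbatim to partial inner functions) gives $Q^*_{1/3}(P)=\Omega(k)$. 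The key observation is that $P$ is a \emph{submatrix} of $F_n$: rewrite $P$ as $\bigl[\PMAJ_k((x_1\overline{x}_1\cdots x_{2k}\overline{x}_{2k})|_{V})\bigr]_{x\in\{0,1\}^{2k},\,V\in\VV(4k,k)}$, which shows $P$ sits inside the problem $G$ on $\{0,1\}^{4k}\times\{0,1\}^{4k}$ with promise $|x|=2k,\ |y|=k$ and the same intersection thresholds; then $G(x,y)=F_n(x\,1^k0^k,\,y\,1^{2k})$ pads both inputs up to weight $n/2$ while shifting the intersection by exactly $k$. Thus $Q^*_{1/3}(F_n)\ge Q^*_{1/3}(G)\ge Q^*_{1/3}(P)=\Omega(k)=\Omega(n)$, with no new spectral analysis required.

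One genuine slip in your write-up: you cannot ``multiply $\psi$ by the indicator of $\dom g$ and renormalize'' to force $\psi$ to vanish off the domain---that operation destroys the orthogonality of $\psi$ to low-degree characters. Fortunately it is also unnecessary: the partial-function dual characterization and the generalized discrepancy bound both carry the off-domain mass as a separate penalty term $\sum_{(x,y)\notin\dom F}|\Psi_{x,y}|$, and the correlation inequality $\sum_{z\in\dom g}\psi(z)g(z)-\sum_{z\notin\dom g}|\psi(z)|>\epsilon$ is exactly what survives the lifting.
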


The remaining part of this section is devoted to the proof of this theorem.
We start by reviewing relevant background on the
pattern matrix method~\cite{SherstovSICOMP11} for
quantum communication lower bounds.
Let $k$ and $n$ be positive integers, where $k<n$ and $k\mid n.$ Partition
$[n]$ into $k$ contiguous blocks, each with $n/k$ elements:
\[
[n] = \left\{1,  2,  \dots,   \frac{n}{k}\right\} 
       \cup 
	   \left\{\frac{n}{k}+1,   \dots,  \frac{2n}{k}\right\}
         \cup \cdots \cup
         \left\{\frac{(k-1)n}{k}+1,  \dots,  n\right\}.
\]
Let $\VV(n,k)$ denote the family of subsets $V\subseteq[n]$
that have exactly one element in each of these blocks (in particular,
$|V|=k$).  Clearly, $|\VV(n,k)|=(n/k)^k.$ For a bit string $x\in\zoon$
and a set $V\in\VV(n,k),$ define the \emph{projection of $x$ onto
$V$} by
$x|_V  =      (x_{i_1},x_{i_2},\dots,x_{i_k}) \in
\zoo^k,$
where $i_1<i_2<\cdots< i_k$ are the elements of $V.$ 
For $\phi\colon \zoo^k\to\Re\cup\{*\},$ the \emph{$(n,k,\phi)$-pattern matrix} is
the matrix $A$ given by
\[ A = \Big[\phi(x|_V\oplus
w)\Big]_{x\in\zoon,\,(V,w)\in\VV(n,k)\times\zoo^k}  \;. \]
In words, $A$ is the matrix of size $2^n$~by~$(n/k)^k2^k$ whose rows are
indexed by strings $x\in\zoon,$ whose columns are indexed by pairs
$(V,w)\in\VV(n,k)\times\zoo^k,$ and whose entries are given by
$A_{x,(V,w)}= \phi(x|_V\oplus w).$

The pattern matrix method gives a lower bound on the
quantum communication complexity of a pattern matrix in
terms of the \emph{approximate degree} of its generating
function. We now define this notion formally. 
Let $f\colon X\to\Re$ be given, for a finite subset $X\subset\Re^{n}.$
The \emph{$\epsilon$-approximate degree} of $f,$ denoted $\degeps(f),$
is the least degree of a real polynomial $\pi$ such that
$|f(x)-\pi(x)|\leq \epsilon$ for all $x\in X.$
One generalizes this definition to partial functions
$f\colon X\to\Re\cup \{*\}$ 
by letting $\degeps(f)$ be the least degree of a real polynomial
$\pi$ with 
\begin{align*}
 & |f(x)-\pi(x)|\leq\epsilon, &  & x\in\dom f,\\
 & |\pi(x)|\leq1+\epsilon, &  & x\in X\setminus\dom f.
\end{align*}
We will need the following version of the pattern
matrix method for quantum lower bounds.

\begin{theorem}
\label{thm:pmm}
Let $F$ be the $(n,k,f)$-pattern matrix, 
where $f\colon \zoo^k\to\{-1,+1,*\}$ is given.
Then for every $\epsilon\in[0,1)$ and every $\delta<\epsilon/2,$
\begin{align*}
Q^*_{\delta}(F) &\geq
\frac{1}{4}  \degeps(f)\log \left(\frac{n}{k}\right) - 
\frac12 \log\left(\frac{3}{\epsilon-2\delta}\right).
\end{align*}
\end{theorem}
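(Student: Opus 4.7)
The plan is to adapt Sherstov's proof of the pattern matrix method to the partial-function setting. Only the first ingredient --- the dual characterisation of approximate degree --- genuinely needs reworking; the spectral-norm calculation and the communication-complexity inequality then transfer essentially verbatim.

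I would first write $\degeps(f)$ as a linear program in the Fourier coefficients of a candidate approximant $\pi$, with the constraints $|\pi(z)-f(z)|\le \epsilon$ for $z\in\dom f$ and $|\pi(z)|\le 1+\epsilon$ for $z\in\zoo^k\setminus\dom f$. Applying Farkas' lemma yields the following dual witness whenever $\degeps(f)\ge d$: there exists $\psi\colon \zoo^k\to \Re$ with $\sum_z|\psi(z)|=1$, with $\sum_z\psi(z)\chi_S(z)=0$ for every $|S|<d$, and with
\[
\sum_{z\in\dom f} f(z)\psi(z)\;>\;\epsilon\;+\;\sum_{z\notin\dom f}|\psi(z)|.
\]
The extra term $\sum_{z\notin\dom f}|\psi(z)|$ on the right is the only difference from Sherstov's total-function dual, and it is exactly the price paid for the relaxed constraint $|\pi(z)|\le 1+\epsilon$ off the domain.

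Next, I would lift $\psi$ to a matrix witness $\Psi$ for the pattern matrix $F$ by setting $\Psi_{x,(V,w)} \propto \psi(x|_V\oplus w)$, with the normalisation that makes $\|\Psi\|_1=1$. Sherstov's singular value decomposition of pattern matrices in the parity basis --- which only uses the Fourier orthogonality of $\psi$ and therefore goes through unchanged --- then yields $\|\Psi\|_{\mathrm{spec}}\sqrt{R\cdot C}\le (k/n)^{d/2}$, where $R$ and $C$ are the numbers of rows and columns of $F$. Composing this with the generalised discrepancy lower bound for quantum communication,
\[
Q^*_\delta(F) \;\ge\; \tfrac12 \log \frac{\,|\langle \Psi,\tilde F\rangle|-2\delta\|\Psi\|_1\,}{\|\Psi\|_{\mathrm{spec}}\sqrt{R\cdot C}},
\]
applied to any sign extension $\tilde F$ of the partial $F$, produces the desired inequality. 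The crucial partial-function feature is that the $\sum_{z\notin\dom f}|\psi(z)|$ slack from the dual exactly cancels the worst-case contribution to $\langle \Psi,\tilde F\rangle$ from off-domain entries, so the correlation remains at least $\epsilon$ for every sign extension of $F$, not only for the ``correct'' one. Plugging in the two bounds gives $\tfrac14\,\degeps(f)\log(n/k)-\tfrac12\log\tfrac{1}{\epsilon-2\delta}$, which after absorbing a small universal constant inside the second logarithm matches the stated bound.

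The main obstacle I expect is Step 1, namely setting up the dual LP cleanly. The subtlety is arranging the slack so that the off-domain penalty enters \emph{additively} into the correlation inequality, because it is precisely this additive form that lets the generalised discrepancy argument survive for \emph{every} sign extension of $F$ --- the defining feature a partial-function lower bound must have. Everything after Step 1 --- the spectral-norm estimate and the generalised discrepancy inequality itself --- can be used as a black box from Sherstov's original work with no modification.
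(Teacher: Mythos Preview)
Your proposal is correct and follows essentially the same route as the paper: obtain a dual witness $\psi$ for $\degeps(f)$ with the extra off-domain slack term, lift it to the pattern matrix $\Psi$, bound $\|\Psi\|$ via Sherstov's spectral-norm formula, and finish with the generalised discrepancy method. The only cosmetic difference is that the paper states and proves a partial-function version of the generalised discrepancy inequality directly (with the explicit term $-\sum_{(x,y)\notin\dom F}|\Psi_{x,y}|$), whereas you phrase the same step as ``the correlation $\langle\Psi,\tilde F\rangle$ is at least $\epsilon$ for every sign extension $\tilde F$''; these are equivalent, but note that this step is not literally a black-box use of the total-function inequality---it is the (easy) adaptation you yourself describe.
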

\noindent
Theorem~\ref{thm:pmm} is a generalization of the
original pattern matrix method
of~\cite{SherstovSICOMP11} to partial functions. For
the reader's convenience, we give a detailed proof of
Theorem~\ref{thm:pmm} in Appendix~\ref{sec:pmm-proof}.\vspace{2mm}

\noindent{\emph{Proof of Theorem \ref{thm:sherstov}}.
The communication complexity of $F_n$ is 
monotone in $n,$ due to 
$F_{n}(x,y)=F_{n+4}(x0011,y0101).$ As a result, it
suffices to prove the theorem for $n$ divisible by $3.$ 
Under this divisibility assumption, define $k=n/6$ and 
consider the function
$\PMAJ_k\colon\zoo^k\to\{-1,+1,*\}$ given by
\begin{align*}
\PMAJ_k(x)=
\begin{cases}
-1 &\text{if $|x|=k/2,$}\\
+1 &\text{if $|x|=k/2-1,$}\\
* &\text{otherwise.}
\end{cases}
\end{align*}
Let $P$ be the $(2k,k,\PMAJ_k)$-pattern matrix.
It is a well-known 
fact~\cite{paturi92approx,bun-thaler13and-or-tree}
that $\deg_{1/3}(\PMAJ_k)=\Omega(k).$
As a result, Theorem~\ref{thm:pmm}
implies that $Q^*_{1/7}(P)=\Omega(k)$ and hence also
$Q^*_{1/3}(P)=\Omega(k).$

Writing
$P=[\PMAJ_k((x_1\overline x_1 x_2\overline x_2\ldots
x_{2k}\overline
x_{2k})|_V)]_{x\in\zoo^{2k},V\in\VV(4k,k)}$
makes it clear that $P$ is a restriction of the 
more general communication problem 
$G\colon\zoo^{4k}\times\zoo^{4k}\to\{-1,+1,*\}$ defined
by
\begin{align*}
G(x,y)=\begin{cases}
-1&\text{if $|x|=2k,\;|y|=k,$ and $|x\cap y|=k/2,$}\\
+1&\text{if $|x|=2k,\;|y|=k,$ and $|x\cap y|=k/2-1,$}\\
*&\text{otherwise.}
\end{cases}
\end{align*}
As a result, $Q^*_{1/3}(G)\geq Q^*_{1/3}(P)=\Omega(k).$
This in turn implies that
$Q^*_{1/3}(F_n)=\Omega(k)$ because 
$G(x,y)=F_n(x1^k0^k,y1^{2k})$.

\subsection{Closeness Testing Reduction}
In this subsection we explain how the lower bound on the quantum communication complexity of PromisedGHD($n,\kappa$) (Theorem \ref{th:PGHDlw}) implies Theorem~\ref{th:lb}.\vspace{2mm}



\noindent{\emph{Proof of Theorem \ref{th:lb}}.
In~\cite{AMN19}, Andoni, Malkin and Nosatzki show a reduction from the problem PromisedGHD($m,\kappa$) to the problem \CTc{} with parameters $m=\frac{n^2}{t^2\log^3 n}$ and $\kappa=O(\sqrt{\log n})$.\footnote{The reduction is actually stated, in Theorem 9 in \cite{AMN19}, as a reduction from PromisedGHD($m,\kappa$) to the variant of \CTc{} where the number of samples is Poi($t$) instead of exactly $t$. Nevertheless the Poisson version easily reduces to the original problem with $10t$ number of samples if we allow some extra error, since by Chebyshev's inequality the probability that the Poisson version gives more than $10t$ samples is less than $1/81$.} Theorem \ref{th:PGHDlw} thus gives us the claimed lower bound. In the remaining of the proof, we show that the distributions used to prove the lower bound have low $l_2$-norm.


The input distributions of \CTc{} used in the reduction shown in \cite{AMN19}, which we will denote $a$ and $b$, are of the following form:
half of the mass is uniformly distributed on $d=n/10$ elements, and the other half of the mass on $l=C_0 \cdot t \cdot  \log n$ other elements, where $C_0$ is some constant.
Therefore:
\begin{align*}
||a||_2=||b||_2&=\sqrt{d(\frac{1}{2d})^2+l(\frac{1}{2l})^2} \\
&=\frac{1}{2}\sqrt{\frac{1}{d}+\frac{1}{l}} \\
&=\frac{1}{2}\sqrt{\frac{10}{n}+\frac{1}{C_0 t \log n}}\\
&\leq \frac{1}{2}\sqrt{(10+\frac{1}{C_0}) \frac{1}{t \log n}},
\end{align*}
since  $\frac{1}{n}\leq \frac{1}{t \log n}$ because $t\leq \frac{n}{\log^c n}$.

Define $\gamma_{LW}$ as the smallest $\gamma$ such that $\min(||a||_2, ||b||_2)\le \gamma t\epsilon^2/ n$ holds. The above calculations  show that
\[
\gamma_{LW}=\frac{||a||_2 n}{t\epsilon^2}
\leq  \frac{1}{2}\sqrt{10+\frac{1}{C_0}} \frac{n}{t\epsilon^2\sqrt{t\log n}}
\leq  \frac{1}{2C^{3/2}}\sqrt{10+\frac{1}{C_0}} \frac{1}{\sqrt{\log n}}
\]
because $t\geq C n^{2/3}\cdot\epsilon^{-4/3}$ by~(\ref{ic}). Thus $\gamma_{LW}=O(1/\sqrt{\log n})$.
This concludes the proof. 

\bigskip \noindent\textbf{Acknowledgements.}
Guillaume Malod was partially supported by a JSPS Invitational Fellowships for Research in Japan.
Aleksandrs Belovs is supported by the ERDF project number 1.1.1.2/I/16/113.
Arturo Castellanos is grateful to Shin-ichi Minato for his support, and also to MEXT.
Alexander A. Sherstov\ was supported by NSF grant CCF-1814947.
Fran{\c c}ois Le Gall was supported by JSPS KAKENHI grants Nos.~JP16H01705, JP19H04066, JP20H00579, JP20H04139 and by the MEXT Quantum Leap Flagship Program (MEXT Q-LEAP) grant No.~JPMXS0118067394.

\bibliographystyle{alpha}
\bibliography{refs,sherstov}

\appendix

\section{The Pattern Matrix Method for Partial Functions}
\label{sec:pmm-proof}

The purpose of this appendix is to provide a detailed
proof of Theorem~\ref{thm:pmm} for partial functions.
Our proof closely follows the original proof of the
pattern matrix method in~\cite{SherstovSICOMP11},
developed there for total functions.

We start by recalling the Fourier transform 
for functions $f\colon\zoon\to\Re.$ 
For $S\subseteq\oneton,$
define $\chi_{S}\colon\zoon\to\moo$ by $\chi_{S}(x)=(-1)^{\sum_{i\in S}x_{i}}.$
Then every function $f\colon\zoon\to\Re$ has a unique representation
of the form 
\[
f=\sum_{S\subseteq\oneton}\hat{f}(S)\,\chi_{S},
\]
where $\hat{f}(S)=2^{-n}\sum_{x\in\zoon}f(x)\chi_{S}(x).$ The reals
$\hat{f}(S)$ are called the \emph{Fourier coefficients of $f.$}

For a real matrix $A,$ we let $\|A\|_1$ denote the sum
of the absolute values of the entries of $A.$ We 
let $\|A\|$ denote the spectral norm of $A.$ Recall that 
$\|A\|= \max_{x:\|x\|_2=1} \|Ax\|_2.$ 
The following theorem~\cite[Theorem~4.3]{SherstovSICOMP11} 
determines the spectral norm of a pattern matrix in
terms of the Fourier spectrum of its generating
function.

\begin{theorem}
\label{thm:pattern-spectrum}
Let $\phi\colon \zoo^k\to\Re$ be given. Let $A$ be the $(n,k,\phi)$-pattern
matrix. Then 
\[ 
\|A\| \;=\; \sqrt{ 2^{n+k} \left( \frac{n}{k}\right)^k} 
\;\max_{S\subseteq[k]} 
    \left\{ |\hat\phi(S)| \left( \frac{k}{n}\right)^{|S|/2} \right\}.
\]
\end{theorem}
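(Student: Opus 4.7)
The plan is to simultaneously diagonalize $AA^*$ and all of its ``Fourier pieces,'' and then read the spectrum off directly. First, Fourier-expand $\phi = \sum_{S \subseteq [k]} \hat\phi(S)\, \chi_S$ and correspondingly write $A = \sum_S \hat\phi(S)\, A_S$, where $A_S$ is the $(n,k,\chi_S)$-pattern matrix. Denoting by $V(S) \subseteq [n]$ the image of $S$ under the natural indexing of $V \in \VV(n,k)$, the identity
\[
\chi_S(x|_V \oplus w) \;=\; \chi_{V(S)}(x)\,\chi_S(w)
\]
endows each $A_S$ with a product structure that drives the rest of the argument.

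Next, I would show that $A_S A_{S'}^* = 0$ whenever $S \ne S'$ by a direct calculation:
\[
(A_S A_{S'}^*)_{x,x'} \;=\; \Bigl(\sum_V \chi_{V(S)}(x)\,\chi_{V(S')}(x')\Bigr) \cdot \sum_{w \in \zoo^k} \chi_{S \triangle S'}(w),
\]
where the $w$-sum equals $2^k$ if $S=S'$ and $0$ otherwise. Hence $AA^* = \sum_S \hat\phi(S)^2 \, A_S A_S^*$, which reduces the spectral-norm computation to analyzing a single Fourier piece.

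The main analytic step exploits the block-aligned tensor decomposition $\mathbb{R}^{2^n} \cong \bigotimes_{i=1}^{k} \mathbb{R}^{2^{n/k}}$. Letting $B_i \subseteq [n]$ denote the $i$-th block and $b_i(z) = \sum_{j \in B_i} (-1)^{z_j}$, the $(x,x')$-entry of $A_S A_S^*$ simplifies to $2^k (n/k)^{k-|S|} \prod_{i \in S} b_i(x \oplus x')$, which factorizes across the $k$ blocks as a tensor product of local $2^{n/k} \times 2^{n/k}$ operators: the all-ones matrix for each $i \notin S$, and the convolution with kernel $b_i$ for each $i \in S$. A routine Fourier computation on $\zoo^{n/k}$ diagonalizes each local factor in the character basis with top eigenvalue $2^{n/k}$, so $A_S A_S^*$ is diagonalized by the characters $\{\chi_T\}_{T \subseteq [n]}$ of $\mathbb{R}^{2^n}$, \emph{simultaneously for every} $S$. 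A character $\chi_T$ is an eigenvector of $A_S A_S^*$ with nonzero eigenvalue exactly when $T$ is a transversal for $S$, meaning $T$ contains one element from each $B_i$ with $i \in S$ and none from the other blocks; the corresponding eigenvalue is $2^{n+k}(n/k)^{k-|S|}$.

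Finally, because a given $T$ can be a transversal for at most one $S$, each basis vector $\chi_T$ receives a contribution from a single term of $AA^* = \sum_S \hat\phi(S)^2 A_S A_S^*$, and the largest eigenvalue of $AA^*$ is therefore $\max_S \hat\phi(S)^2 \cdot 2^{n+k}(n/k)^{k-|S|}$. Taking square roots and factoring out $\sqrt{2^{n+k}(n/k)^k}$ yields the theorem. The main obstacle is the tensor-product analysis of $A_S A_S^*$ together with the block-level bookkeeping that matches each character $\chi_T$ to its unique compatible $S$; once those are in place, the conclusion follows immediately.
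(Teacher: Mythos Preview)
The paper does not prove this theorem; it quotes it verbatim as \cite[Theorem~4.3]{SherstovSICOMP11} and uses it as a black box. There is therefore no in-paper proof to compare against.

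That said, your outline is correct and is essentially the argument from the original reference. Fourier-expanding $\phi$, establishing $A_S A_{S'}^{*}=0$ for $S\ne S'$ via the $w$-sum $\sum_{w}\chi_{S\triangle S'}(w)$, and then tensor-factorizing $A_S A_S^{*}$ across the $k$ blocks is exactly Sherstov's route. Your identification of the nonzero eigenvectors of $A_S A_S^{*}$ with the characters $\chi_T$ where $T$ hits each block $B_i$, $i\in S$, in a single point and misses all other blocks, and the resulting eigenvalue $2^{n+k}(n/k)^{k-|S|}$, are both right; the observation that any such $T$ determines $S$ uniquely is what lets you read off $\|AA^{*}\|=\max_S \hat\phi(S)^2\, 2^{n+k}(n/k)^{k-|S|}$ without any cross terms. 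Nothing is missing.
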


We will also need the following dual characterization of approximate
degree of partial functions, analogous to the dual
characterization for total functions used in~\cite{SherstovSICOMP11}.
\begin{theorem}
\label{thm:dual-approx} Let
$f\colon\zoon\to\Re\cup\{*\}$ be a given function, $d\geq0$ an
integer.
Then $\degeps(f)>d$ if and only if there exists
$\psi\colon\zoon\to\Re$ such that 
\[
\sum_{x\in\dom f}f(x)\psi(x)-\sum_{x\notin\dom f}|\psi(x)|-\epsilon\|\psi\|_{1}>0,
\]
and $\hat{\psi}(S)=0$ for $|S|\leq d.$ 
\end{theorem}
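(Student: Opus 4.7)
The plan is to derive this dual characterization by linear programming duality, specifically through a hyperplane separation argument between a finite-dimensional subspace and a compact convex box in $\Re^{\zoon}$. This is exactly the approach used for the total-function case in \cite{SherstovSICOMP11}; the only adjustment is that the box is now a product of two different types of intervals, reflecting the approximation constraint on $\dom f$ versus the boundedness constraint on $\zoon \setminus \dom f$.

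First I would reformulate the primal condition. Identifying a polynomial $\pi$ of degree at most $d$ with its table of values in $\Re^{\zoon}$, the set of such polynomials is the linear subspace $V_d \subseteq \Re^{\zoon}$ spanned by the characters $\{\chi_S : |S|\leq d\}$. The condition $\degeps(f)\leq d$ is the feasibility of the system asking for $\pi\in V_d$ with $\pi(x)\in[f(x)-\epsilon,\,f(x)+\epsilon]$ for $x\in\dom f$ and $\pi(x)\in[-(1+\epsilon),\,1+\epsilon]$ for $x\notin\dom f$. Writing $B$ for the Cartesian product of these intervals, the statement $\degeps(f)>d$ becomes the geometric assertion $V_d\cap B=\emptyset$.

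Next I would invoke hyperplane separation. Since $B$ is compact and convex and $V_d$ is a closed subspace, non-intersection is equivalent to the existence of a vector $\psi\in\Re^{\zoon}$ with $\sup_{\pi\in V_d}\langle\psi,\pi\rangle<\inf_{v\in B}\langle\psi,v\rangle$. Because $V_d$ is a linear subspace, the left supremum is either $0$ or $+\infty$, and a meaningful separation forces $\psi$ to annihilate $V_d$; since $V_d$ is spanned by $\{\chi_S:|S|\leq d\}$, this is exactly the Fourier condition $\hat\psi(S)=0$ for $|S|\leq d$. The separation inequality thus collapses to $\inf_{v\in B}\langle\psi,v\rangle>0$.

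Finally I would evaluate the infimum in closed form, which decouples coordinatewise because $B$ is a product of intervals. A direct calculation gives, for $x\in\dom f$, $\min_{v(x)\in[f(x)-\epsilon,f(x)+\epsilon]}\psi(x)v(x)=\psi(x)f(x)-\epsilon|\psi(x)|$, and for $x\notin\dom f$, $\min_{v(x)\in[-(1+\epsilon),1+\epsilon]}\psi(x)v(x)=-(1+\epsilon)|\psi(x)|$. Summing and regrouping the $\epsilon$-terms yields exactly
\[
\sum_{x\in\dom f} f(x)\psi(x)\;-\;\sum_{x\notin\dom f}|\psi(x)|\;-\;\epsilon\|\psi\|_{1},
\]
and strict positivity of this quantity is the claimed condition. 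For the easier converse, given such a $\psi$ and any candidate $\pi\in V_d$ satisfying the approximation constraints, one has $\langle\psi,\pi\rangle=0$ by the Fourier vanishing but also $\langle\psi,\pi\rangle\geq\inf_{v\in B}\langle\psi,v\rangle>0$, a contradiction. The only delicate point is passing between weak and strict separation, which is handled by compactness of $B$: the infimum is attained, so disjointness produces a genuinely strict separating functional without further limiting argument.
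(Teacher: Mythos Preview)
Your argument is correct and is precisely the linear programming duality argument the paper invokes: the paper does not spell out a proof of Theorem~\ref{thm:dual-approx} but simply states that it ``follows from linear programming duality'' and cites \cite{SherstovSICOMP11,sherstov11quantum-sdpt}. Your hyperplane-separation proof between the low-degree subspace $V_d$ and the product box $B$ is exactly that duality made explicit, with the coordinatewise minimum computation recovering the stated inequality; nothing is missing.
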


\noindent
Theorem~\ref{thm:dual-approx} follows from linear programming
duality; see~\cite{SherstovSICOMP11,sherstov11quantum-sdpt} for
details.

Next, we derive a version of the \emph{generalized
discrepancy method} for partial functions, by adapting
the analogous proof in~\cite[Theorem~2.8]{SherstovSICOMP11} 
for total functions.

\begin{theorem}
Let $X,Y$ be finite sets and $F\colon X\times
Y\to\{-1,+1,*\}$ a given function.
Let $\Psi=[\Psi_{xy}]_{x\in X,\,y\in Y}$ be any real matrix with $\|\Psi\|_1=1.$
Then for each $\epsilon>0,$
\[ 4^{Q^*_\epsilon(F)} \geq
\frac{1}{3\,\|\Psi\|\sqrt{|X|\,|Y|}}
\left(\sum_{(x,y)\in\dom F}\Psi_{x,y}F(x,y)-
\sum_{(x,y)\notin\dom F}|\Psi_{x,y}| - 
2\epsilon\right).
\]
\label{thm:discrepancy-method}
\end{theorem}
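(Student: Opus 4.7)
The plan is to mirror Sherstov's generalized discrepancy argument for total sign matrices, but keeping careful track of how the protocol is allowed to behave on $(x,y)\notin\dom F$. To a $Q$-qubit $\epsilon$-error protocol for $F$ with arbitrary prior entanglement I would associate its bias matrix $\Pi$, defined by $\Pi_{x,y}=1-2\Pr[\text{protocol outputs }-1\text{ on }(x,y)]\in[-1,+1]$. The error guarantee then translates to $|\Pi_{x,y}-F(x,y)|\le 2\epsilon$ for $(x,y)\in\dom F$, whereas on the complement we only have the trivial bound $|\Pi_{x,y}|\le 1$.

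For the lower bound on $\langle\Psi,\Pi\rangle$ I would split the sum between $\dom F$ and its complement. On $\dom F$, writing $\Psi_{x,y}\Pi_{x,y}=\Psi_{x,y}F(x,y)+\Psi_{x,y}(\Pi_{x,y}-F(x,y))$ and bounding the error term by $2\epsilon|\Psi_{x,y}|$ gives a contribution of at least $\sum_{\dom F}\Psi_{x,y}F(x,y)-2\epsilon\sum_{\dom F}|\Psi_{x,y}|$. On the complement, use $\Psi_{x,y}\Pi_{x,y}\ge -|\Psi_{x,y}|$. Summing and using $\|\Psi\|_1=1$ yields
\[
\langle\Psi,\Pi\rangle \;\ge\; \sum_{(x,y)\in\dom F}\Psi_{x,y}F(x,y)\;-\;\sum_{(x,y)\notin\dom F}|\Psi_{x,y}|\;-\;2\epsilon,
\]
which is precisely the numerator appearing in the theorem. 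For the upper bound on $\langle\Psi,\Pi\rangle$ I would invoke as a black box the Razborov--Yao nuclear-norm bound $\|\Pi\|_\Sigma\le 3\cdot 4^{Q}\sqrt{|X|\,|Y|}$ for the bias matrix of any $Q$-qubit protocol with prior entanglement, and combine it with the matrix H\"older inequality $\langle\Psi,\Pi\rangle\le\|\Psi\|\cdot\|\Pi\|_\Sigma$. Setting $Q=Q^*_\epsilon(F)$ and rearranging produces the claimed inequality.

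The only conceptual departure from the total-function proof is the handling of entries outside $\dom F$: because the protocol is unconstrained there, one cannot reduce to a single $\pm 1$ sign matrix as in the original argument, and the clean fix is to absorb the uncertainty into the worst-case penalty $-\sum_{(x,y)\notin\dom F}|\Psi_{x,y}|$. All the genuinely quantum ingredients---in particular the trace-norm decomposition of prior-entangled quantum protocol matrices---are reused unchanged, so no new technical obstacle arises. The main thing to be careful about is simply the bookkeeping: partitioning the inner product into a ``safe'' part on $\dom F$ and a ``paid-for'' part on $X\times Y\setminus\dom F$, and using $\|\Psi\|_1=1$ to convert $2\epsilon\sum_{\dom F}|\Psi_{x,y}|$ into the clean additive $-2\epsilon$ term.
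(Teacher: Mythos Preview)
Your proposal is correct and follows essentially the same argument as the paper: the paper also associates to the protocol the bias matrix $1-2\Pi$ (where their $\Pi$ is the acceptance-probability matrix), lower-bounds $\langle\Psi,1-2\Pi\rangle$ by exactly your split into $\dom F$ and its complement, and upper-bounds it by citing from \cite{SherstovSICOMP11} the bound $\langle\Psi,1-2\Pi\rangle\le\|\Psi\|(2\cdot4^{C}+1)\sqrt{|X|\,|Y|}$, which is the same trace-norm/H\"older estimate you invoke. The only differences are cosmetic: what each of you calls $\Pi$, and whether the spectral step is quoted as a black box from \cite{SherstovSICOMP11} or phrased as the Razborov--Yao nuclear-norm bound plus matrix H\"older.
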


\begin{proof}
Let $P$ be a quantum protocol with prior entanglement that computes
$F$ with error $\epsilon$ and cost $C.$ 
Let $\Pi$ be the matrix of acceptance probabilities of
$P,$ so that $\Pi_{x,y}$ is the probability that $P$
accepts the input $(x,y).$ 
It is shown in the 
proof of~\cite[Theorem~2.8]{SherstovSICOMP11} that
\begin{align}
\sum_{x\in X}\sum_{y\in Y}  \Psi_{x,y}(1-2\Pi_{x,y})
  &\leq \|\Psi\|\; \left(2\cdot 4^{C}+1\right)\sqrt{|X|\;|Y|}.
\label{eqn:upper-inner-product}
\end{align}
Now observe that
$1-2\Pi_{x,y}$ ranges in $[F(x,y)-2\epsilon,F(x,y)+2\epsilon]$ on
$\dom F,$ and 
is bounded in absolute value by $1$ otherwise.
This gives
\begin{align}
\sum_{x\in X}\sum_{y\in Y}
\Psi_{x,y}&(1-2\Pi_{x,y})\nonumber\\&\geq
\sum_{(x,y)\in \dom F} (\Psi_{x,y}F(x,y)-2\epsilon|\Psi_{x,y}|)
-\sum_{(x,y)\notin \dom F} |\Psi_{x,y}|\nonumber\\
&\geq
\sum_{(x,y)\in \dom F}
\Psi_{x,y}F(x,y)-2\epsilon
-\sum_{(x,y)\notin \dom F} |\Psi_{x,y}|,
 \label{eqn:lower-inner-product}
\end{align}
where the last step uses $\|\Psi\|_1\leq1.$
The theorem follows by comparing 
the upper bound (\ref{eqn:upper-inner-product}) with
the lower bound (\ref{eqn:lower-inner-product}).
\end{proof}

We are now in a position to prove
Theorem~\ref{thm:pmm}, which we restate here for the
reader's convenience.
\begin{theorem}[restatement of Theorem~\ref{thm:pmm}]
Let $F$ be the $(n,k,f)$-pattern matrix, 
where $f\colon \zoo^k\to\{-1,+1,*\}$ is given.
Then for every $\epsilon\in[0,1)$ and every $\delta<\epsilon/2,$
\begin{align}
Q^*_{\delta}(F) &\geq
\frac{1}{4}  \degeps(f)\log \left(\frac{n}{k}\right) - 
\frac12 \log\left(\frac{3}{\epsilon-2\delta}\right).
\label{eqn:pattern-matrix-general-error}
\end{align}
\end{theorem}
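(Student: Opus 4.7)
The plan is to run Sherstov's pattern matrix argument from \cite{SherstovSICOMP11}, but in the partial-function setting, by assembling the three ingredients already in place: the spectral-norm formula for pattern matrices (Theorem~\ref{thm:pattern-spectrum}), the partial-function dual characterization of approximate degree (Theorem~\ref{thm:dual-approx}), and the partial-function generalized discrepancy method (Theorem~\ref{thm:discrepancy-method}). Set $d=\degeps(f)$. Applying Theorem~\ref{thm:dual-approx} with threshold $d-1$ produces a dual witness $\psi\colon\zoo^k\to\Re$ with $\hat\psi(S)=0$ whenever $|S|<d$; after rescaling I can assume $\|\psi\|_1=1,$ so that
\[
\sum_{z\in\dom f} f(z)\psi(z)\;-\;\sum_{z\notin\dom f}|\psi(z)|\;>\;\epsilon.
\]

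Next, I would lift $\psi$ to a witness matrix $\Psi$ for $F$ by the ``pattern'' construction
\[
\Psi_{x,(V,w)}\;=\;\frac{\psi(x|_V\oplus w)}{2^n(n/k)^k}.
\]
A routine double-counting, using that for each fixed $(V,w)$ the map $x\mapsto x|_V\oplus w$ hits every point of $\zoo^k$ exactly $2^{n-k}$ times, gives $\|\Psi\|_1=1$ and the key identity
\[
\sum_{(x,(V,w))\in\dom F}\Psi_{x,(V,w)}F(x,(V,w))\;-\;\sum_{(x,(V,w))\notin\dom F}|\Psi_{x,(V,w)}|
=\sum_{z\in\dom f}f(z)\psi(z)-\sum_{z\notin\dom f}|\psi(z)|\;>\;\epsilon.
\]
This equality is where the partial-function setting requires care and is the main conceptual point: the $*$-entries of the pattern matrix $F$ are exactly the lifts of the $*$-entries of $f,$ so the ``outside the domain'' penalty terms on both sides of the identity match perfectly, and the dual bound transfers without loss.

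For the spectral bound, I would feed $\psi$ into Theorem~\ref{thm:pattern-spectrum} and use the elementary estimate $|\hat\psi(S)|\leq 2^{-k}\|\psi\|_1=2^{-k}$ together with $\hat\psi(S)=0$ for $|S|<d$ to get, after simplification,
\[
\|\Psi\|\,\sqrt{|X|\,|Y|}\;\leq\;(k/n)^{d/2}.
\]
Substituting this and the correlation estimate into Theorem~\ref{thm:discrepancy-method} yields $4^{Q^*_\delta(F)}\geq \tfrac{\epsilon-2\delta}{3}(n/k)^{d/2},$ and taking base-$2$ logarithms produces~(\ref{eqn:pattern-matrix-general-error}). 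The spectral and correlation estimates are direct applications of the tools on hand; the only real obstacle is the partial-function bookkeeping, which the construction resolves by making $\dom F$ the exact pattern lift of $\dom f$.
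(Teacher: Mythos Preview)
Your proposal is correct and follows essentially the same route as the paper's proof: both obtain the dual witness $\psi$ from Theorem~\ref{thm:dual-approx} (normalized so that $\|\psi\|_1=1$), lift it to the pattern matrix $\Psi=[\,2^{-n}(n/k)^{-k}\psi(x|_V\oplus w)\,]$, read off $\|\Psi\|_1=1$ and the correlation bound~$>\epsilon$ by the same double-counting you describe, bound $\|\Psi\|$ via Theorem~\ref{thm:pattern-spectrum} together with $|\hat\psi(S)|\le 2^{-k}$ and the Fourier vanishing for $|S|<d$, and finish with Theorem~\ref{thm:discrepancy-method}. Your write-up is in fact slightly more explicit than the paper about why the partial-function penalty terms on the $F$ side coincide with those on the $f$ side, but the argument is the same.
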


\begin{proof}
Let $d=\degeps(f)\geq1.$ By Theorem~\ref{thm:dual-approx},
there is a function $\psi\colon \zoo^k\to\Re$ such that:
\begin{align}
&\;\,\hat\psi(S)=0 &&(|S|<d),
	\label{eqn:psi-fourier-coeffs} \\
&\sum_{z\in\zoo^k}|\psi(z)| =1,
	\label{eqn:psi-bounded} \\
&
\sum_{z\in\dom f}f(z)\psi(z)-\sum_{z\notin\dom f}|\psi(z)|>\epsilon.
	\label{eqn:psi-correl}
\end{align}
Let $\Psi$ be the
$(n,k,2^{-n}(n/k)^{-k}\psi)$-pattern matrix. Then 
(\ref{eqn:psi-bounded}) and (\ref{eqn:psi-correl}) show that
\begin{align}
&\|\Psi\|_1 =1, \label{eqn:K-bounded}\\
&\sum_{(x,y)\in\dom F}F_{x,y}\Psi_{x,y}-\sum_{(x,y)\notin\dom F}
|\Psi_{x,y}|>\epsilon.
\label{eqn:K-M-correl}
\end{align}
Our last task is to calculate $\|\Psi\|.$ It follows
from (\ref{eqn:psi-bounded}) that
\begin{equation}
\max_{S\subseteq[k]} |\hat\psi(S)| \leq 2^{-k}.
	\label{eqn:max-fourier-coeff-psi}
\end{equation}
Theorem~\ref{thm:pattern-spectrum} yields, in view of
(\ref{eqn:psi-fourier-coeffs}) and (\ref{eqn:max-fourier-coeff-psi}):
\begin{equation}
\|\Psi\|
\leq  \left(\frac{k}{n}\right)^{d/2} 
      \left(2^{n+k} \left(\frac{n}{k}\right)^k\right)^{-1/2}. 
\label{eqn:K-norm}
\end{equation}
Now (\ref{eqn:pattern-matrix-general-error})
follows from (\ref{eqn:K-bounded}), (\ref{eqn:K-M-correl}), 
(\ref{eqn:K-norm}), and Theorem~\ref{thm:discrepancy-method}.
\end{proof}

\end{document}